\newtheorem{thm}{Theorem}
\newtheorem{lem}{Lemma}
\newtheorem{mydef}{Definition}
\newtheorem{ass}{Assumption}
\newtheorem{rem}{Remark}
\newtheorem{prop}{Property}
\begin{document}
%
%
%

\begin{frontmatter}

\title{Bridge the gap between
network-based inference method and global ranking method in personal recommendation}

\author[tongji,lab]{Xiwei Liu\corref{lxw}}
\ead{xwliu@tongji.edu.cn, xwliu.sh@gmail.com}

\cortext[lxw]{Corresponding author}

\address[tongji]{Department of Computer Science and
Technology, Tongji University, China}
\address[lab]{The Key Laboratory of Embedded System and Service Computing,
Ministry of Education, Shanghai 200092, China.}
\begin{abstract}
In this paper, we study the relationship between the network-based inference method and global ranking method in personal recommendation. By some theoretical analysis, we prove that the recommendation result under the global ranking method is the limit of applying network-based inference method with infinity times.

\end{abstract}

\begin{keyword}
Global ranking  \sep network-based inference \sep  personal recommendation


\end{keyword}

\end{frontmatter}

\linenumbers

\section{Introduction}
Personal recommendation \cite{lv12} has been a hot topic recently because of its wide and effective application in business, and many recommend methods have been developed. For example, global ranking method (GRM), network-based inference (NBI) method \cite{Zhou} (also called the mass-diffusion (MD) method or probabilistic spreading (ProbS) algorithm), heat-spreading (HeatS) algorithm \cite{Zhang2007,Zhou10} (also called the heat conduction (HC) or heat diffusion process). Many new algorithms have been proposed based on these methods, see \cite{Zhang07}-\cite{Zhou09} and references therein.

In this paper, we will concentrate on the relationship between NBI and GRM in the theoretical view. By studying the NBI, we propose a new algorithm by using the NBI method with multiple times. Using the matrix analysis technique, we rigorously prove that the result on GRM equals to that under the limit of NBI with infinity times.

\section{GRM and NBI: A review}
The following definitions and statement of GRM and NBI have been given in \cite{Zhou}, here we rewrite it to keep the self-integrity of this paper.
\subsection{GRM}
At first, we state the recommendation process of GRM.

{\bf GRM} sorts all the objects in the descending order of degree, and recommends those with highest degrees.

GRM lacks of personalization, but it is widely used since it is simple, the well known ``Yahoo Top 100 MTVs'', ``Amazon List of Top Sellers'', as well as the board of most downloaded articles in many scientific journals, can be all considered as results of GRM.

\subsection{NBI}
NBI sets up a bipartite to discriminate the object-set and the user-set, then uses the diffusive idea to determine the importance of node $i$ in node $j$'s sense.

At first, we present the definition of bipartite.
\begin{mydef}\label{bi}(See \cite{K2012})
A simple graph $G$ is called \emph{bipartite} if its vertex set $V$ can be partitioned into two disjoint sets $V_1$ and $V_2$ such that every edge in the graph connects a vertex in $V_1$ and a vertex in $V_2$ (so that no edge in $G$ connects either two vertices in $V_1$ or two vertices in $V_2$). When this condition holds, we call the pair $(V_1,V_2)$ be a \emph{bipartition} of the vertex set $V$ of $G$.
\end{mydef}

Consider a general bipartite network $G(V_1,V_2,E)$, $V_1=\{o_1,\cdots,o_n\}$ means the object-set, and $V_2=\{u_1,u_2,\cdots,u_m\}$ means the user-set. The $n\times m$ adjacent matrix $A$ is defined by $A=(a_{il})$, where $a_{il}=1$ if user $u_l$ has already collected object $o_i$ and $a_{il}=0$ otherwise. Moreover, assume the initial resource located in the $i$-th object is $f(o_i)\ge 0$. A sketch map of the bipartite network with three objects and four users is given in Figure \ref{f1}.

\begin{figure}
\begin{center}
\includegraphics[width=\textwidth,height=0.3\textheight]{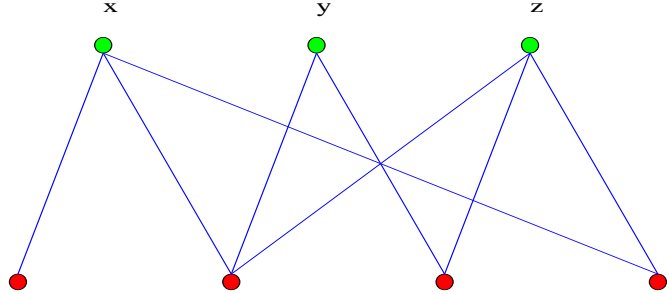}
\caption{Sketch map of the bipartite network with initial resources $x,y,z$}\label{f1}
\end{center}
\end{figure}

The first key point for NBI is that: the resource in an arbitrary $V_1$ node should be equally distributed to its neighbors in $V_2$, here we use notation $V_1\rightarrow V_2$ to denote this step. After this step, all the resource in $V_1$ flows to $V_2$, and the resource located on the $l$-th $V_2$ node is:
\begin{align}
f(u_l)=\sum_{j=1}^n\frac{a_{jl}f(o_j)}{k(o_j)}
\end{align}
where $k(o_j)$ is the degree of object $j$. Figure
\ref{f2} gives a sketch map of the diffusive result under step $V_1\rightarrow V_2$.

\begin{figure}[htp]
\begin{center}
\includegraphics[width=\textwidth,height=0.3\textheight]{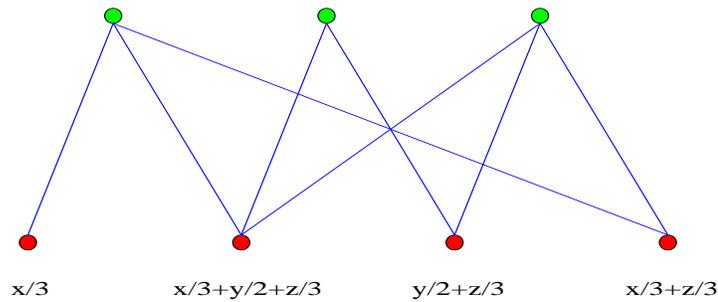}
\caption{Sketch map of the diffusive result under step $V_1\rightarrow V_2$}\label{f2}
\end{center}
\end{figure}

The second key point for NBI is that: the resource in any $V_2$ node should be equally distributed to its $V_1$ neighbors, here we use notation $V_2\rightarrow V_1$ to denote this step. After this step, all the resource in $V_2$ flows back to $V_1$, and the final resource located on $o_i$ is
\begin{align}
f^{\prime}(o_i)=\sum_{l=1}^m\frac{a_{il}f(u_l)}{k(u_l)}=
\sum_{l=1}^m\frac{a_{il}}{k(u_l)}\sum_{j=1}^n\frac{a_{jl}f(o_j)}{k(o_j)}
=\sum_{j=1}^nw_{ij}f(o_j)
\end{align}
where
\begin{align}\label{w}
w_{ij}=\frac{1}{k(o_j)}\sum_{l=1}^m\frac{a_{il}a_{jl}}{k(u_l)}
\end{align}
Therefore, if we denote $F^{(1)}=(f^{\prime}(o_1),\cdots,f^{\prime}(o_n))^T$, $W=(w_{ij})\in R^{n\times n}$ and $F=(f(o_1),\cdots,f(o_n))^T$, then
\begin{align}\label{m}
F^{(1)}=WF
\end{align}

For the example in Figure \ref{f1} and Figure \ref{f2}, after this step, see Figure \ref{f3}, the resources at these three nodes are denoted by $x^{\prime}, y^{\prime}, z^{\prime}$, which can be calculated as
\begin{align}\label{p}
\left(
\begin{array}{c}
x^{\prime}\\
y^{\prime}\\
z^{\prime}
\end{array}
\right)=\left(
\begin{array}{c}
\frac{11}{18}x+\frac{1}{6}y+\frac{5}{18}z\\
\frac{1}{9}x+\frac{5}{12}y+\frac{5}{18}z\\
\frac{5}{18}x+\frac{5}{12}y+\frac{4}{9}z
\end{array}
\right)=
\left(
\begin{array}{ccc}
\frac{11}{18}&\frac{1}{6}&\frac{5}{18}\\
\frac{1}{9}&\frac{5}{12}&\frac{5}{18}\\
\frac{5}{18}&\frac{5}{12}&\frac{4}{9}
\end{array}
\right)
\left(
\begin{array}{c}
x^{\prime}\\
y^{\prime}\\
z^{\prime}
\end{array}
\right)
\end{align}
\begin{figure}[h]
\begin{center}
\includegraphics[width=\textwidth,height=0.29\textheight]{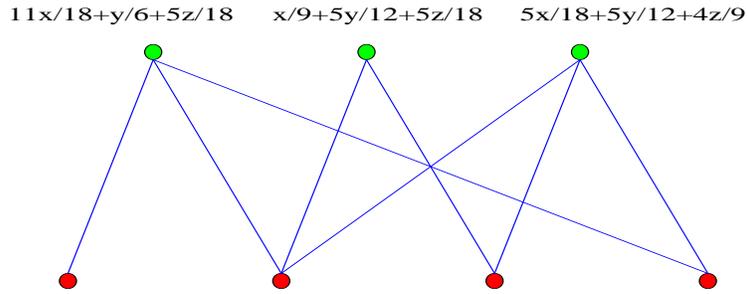}
\caption{Sketch map of the diffusive result under step $V_2\rightarrow V_1$}
\label{f3}
\end{center}
\end{figure}

{\bf NBI recommendation algorithm:} The recommendation algorithm aims at predicting user $u_l$'s personal opinions on those objects $u_l$ has not yet collected, $l=1,\cdots,m$. Set the initial resource located on each node $o_j, j=1,\cdots,n$ of $V_1$ as
\begin{align}
f(o_j)=a_{jl}
\end{align}
That is to say, if the object $o_j$ has been collected by $u_l$, then its initial resource is unit, otherwise it is zero. The initial resource can be understood as giving a unit recommending capacity to each collected object. Therefore,
\begin{align}
f^{\prime}(o_i)=\sum_{j=1}^nw_{ij}f(o_j)=\sum_{j=1}^nw_{ij}a_{jl}
\end{align}
For any user $u_i$, all his uncollected objects $o_j$ are sorted in the descending order of $f^{\prime}(o_j)$, and those objects with highest value of final resource are recommended.

\section{Some theoretical preparations}\label{model}
At the first glance, one cannot find any relationship between GRM and NBI, because GRM does not consider the personal difference, while NBI considers it; moreover, numerical examples show that NBI is better than GRM in personal recommendation \cite{Zhou}. In order to bridge the gap between them from the theoretical point, we should first present some useful lemmas and properties.

\begin{prop}
For the matrix $W=(w_{ij})$, where $w_{ij}$ is defined in (\ref{w}), it is column normalized, that is to say£¬
\begin{align}
\sum\limits_{i=1}^nw_{ij}=1, j=1,\cdots,n.
\end{align}
\end{prop}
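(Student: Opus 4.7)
The plan is to substitute the definition (\ref{w}) of $w_{ij}$ directly into $\sum_{i=1}^n w_{ij}$, swap the order of the (finite) double summation, and recognize two degree-counting identities hidden in the adjacency matrix $A=(a_{il})$.

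First I would pull the factor $1/k(o_j)$ out of the sum over $i$, since it does not depend on $i$, and interchange the sums over $i$ and $l$ to obtain
\begin{align*}
\sum_{i=1}^n w_{ij}
= \frac{1}{k(o_j)}\sum_{l=1}^m \frac{a_{jl}}{k(u_l)}\sum_{i=1}^n a_{il}.
\end{align*}
The key observation is then the bipartite degree identity: because $a_{il}\in\{0,1\}$ indicates whether user $u_l$ has collected object $o_i$, the inner sum $\sum_{i=1}^n a_{il}$ is exactly the user degree $k(u_l)$. Substituting this cancels the $k(u_l)$ in the denominator, leaving $\frac{1}{k(o_j)}\sum_{l=1}^m a_{jl}$. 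Applying the same degree-counting identity on the other side of the bipartition, $\sum_{l=1}^m a_{jl}=k(o_j)$, and the expression collapses to $1$.

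I do not anticipate any real obstacle here; the argument is just an unwinding of definitions via Fubini on a finite sum, together with the observation that row and column sums of the bipartite adjacency matrix are exactly the node degrees on the two sides. The only point to note is that $k(o_j)\neq 0$, so the division in (\ref{w}) is legitimate, which is implicit in the framework since every object $o_j$ under consideration has been collected by at least one user.
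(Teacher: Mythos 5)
Your proof is correct and is essentially the same computation as the paper's: both exchange the order of summation and use the two bipartite degree identities $\sum_{i=1}^n a_{il}=k(u_l)$ and $\sum_{l=1}^m a_{jl}=k(o_j)$ to collapse the double sum to $1$. The paper merely writes the second identity by explicitly enumerating the $p(j)=k(o_j)$ users adjacent to $o_j$, which is a notational difference only.
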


\begin{proof}
According to the definition of $w_{ij}$, in order to prove the above property, we just need to prove that the following conclusion holds:
\begin{align}\label{n1}
\sum\limits_{i=1}^n\frac{1}{k(o_j)}\sum\limits_{l=1}^m\frac{a_{il}a_{jl}}{k(u_l)}=1. \quad~~~ \forall j=1,\cdots,n.
\end{align}

Because $k(o_j)$ is independent on parameters $i$ and $l$, so we let $k(o_j)=p(j)$, where $p(j)$ is an integer depending on $j$, which means that there are $p(j)$ nodes in $V_2$ connecting $j$. Without loss of generality, we assume their indexes are $l_1,l_2,\cdots,l_{p(j)}$. Therefore, with the definition of adjacency matrix, the left of (\ref{n1}) can be represented as:
\begin{align*}
&\sum\limits_{i=1}^n\frac{1}{k(o_j)}\sum\limits_{l=1}^m\frac{a_{il}a_{jl}}{k(u_l)}
=\frac{1}{k(o_j)}\sum\limits_{i=1}^n\sum\limits_{l=1}^m\frac{a_{il}a_{jl}}{k(u_l)}\\
=&\frac{1}{p(j)}\sum\limits_{i=1}^n\bigg(\frac{a_{i,l_1}}{k(u_{l_1})}+\frac{a_{i,l_2}}{k(u_{l_2})}
+\cdots+\frac{a_{i,l_{p(j)}}}{k(u_{l_{p(j)}})}\bigg)\\
=&\frac{1}{p(j)}\bigg[\sum\limits_{i=1}^n\frac{a_{i,l_1}}{k(u_{l_1})}+\sum\limits_{i=1}^n\frac{a_{i,l_2}}{k(u_{l_2})}
+\cdots+\sum\limits_{i=1}^n\frac{a_{i,l_{p(j)}}}{k(u_{l_{p(j)}})}\bigg]\\
=&\frac{1}{p(j)}(\underbrace{1+1+\cdots+1}_{p(j)})=1
\end{align*}
The proof is completed.
\end{proof}

\begin{rem}
In fact, it is easy to obtain the above property from the point of matrix. At first, from the definition of $w_{ij}$, we have
\begin{eqnarray}\label{u}
W=(AU^{-1})(O^{-1}A)^T,
\end{eqnarray}
where $W=(w_{ij})\in R^{n\times n}$, diagonal matrix $U=\mathrm{diag}\{k(u_1),\cdots,k(u_m)\}\in R^{m\times m}$ means the degree matrix in $V_2$, so $AU^{-1}$ is a matrix with its each column sum is $1$, i.e.,
\begin{align*}
(1,\cdots,1)_{1\times n}\cdot(AU^{-1})=(1,\cdots,1)_{1\times m},
\end{align*}
Similarly, the diagonal matrix $O=\mathrm{diag}\{k(o_1),\cdots,k(o_n)\}\in R^{n\times n}$ means the degree matrix in set $V_1$, so $O^{-1}A$ is a matrix with its each row column is $1$, i.e., $(O^{-1}A)^T$ is also a matrix with its each column sum is $1$. Therefore,
\begin{align*}
(1,\cdots,1)_{1\times m}\cdot(O^{-1}A)^T=(1,\cdots,1)_{1\times n}.
\end{align*}
Thus,
\begin{eqnarray*}
(1,\cdots,1)_{1\times n}\cdot W=(1,\cdots,1)_{1\times n}\cdot(AU^{-1})(O^{-1}A)^T=(1,\cdots,1)_{1\times n},
\end{eqnarray*}
That is to say, matrix $W$ is also a matrix with its each column sum is $1$. Furthermore, one can also get that matrix $W$ has a left eigenvector $(1,\cdots,1)^T\in R^{n\times 1}$ corresponding to eigenvalue $1$.

Moreover, in the form of matrix, we can also easily get one right eigenvector for matrix $W$ is $(k(o_1),\cdots,k(o_n))^T\in R^{n\times 1}$ corresponding to eigenvalue $1$, since
\begin{align*}
&W\cdot(k(o_1),\cdots,k(o_n))^T\\
=&(AU^{-1})(O^{-1}A)^T\cdot(k(o_1),\cdots,k(o_n))^T=(AU^{-1})A^TO^{-1}\cdot(k(o_1),\cdots,k(o_n))^T\\
=&(AU^{-1})A^T\cdot(1,\cdots,1)^T_{n\times 1}=AU^{-1}(k(u_1),\cdots,k(u_m))^T\\
=&A(1,\cdots,1)^T_{m\times 1}=(k(o_1),\cdots,k(o_n))^T
\end{align*}

\end{rem}
For example, in Figure \ref{f1}, the adjacency matrix is
$A=\left(\begin{array}{cccc}1&1&0&1\\0&1&1&0\\0&1&1&1\end{array}\right)$, thus
\begin{align}
W&=(AU^{-1})(O^{-1}A)^T\nonumber\\
&=A\cdot \left(\begin{array}{cccc}1&0&0&0\\0&1/3&0&0\\0&0&1/2&0\\0&0&0&1/2\end{array}\right)
\cdot A^T \cdot
\left(\begin{array}{ccc}1/3&0&0\\0&1/2&0\\0&0&1/3\end{array}\right)
\nonumber\\
&= \left(\begin{array}{cccc}1&1/3&0&1/2\\0&1/3&1/2&0\\0&1/3&1/2&1/2\end{array}\right)
\cdot
\left(\begin{array}{ccc}1/3&0&0\\1/3&1/2&1/3\\0&1/2&1/3\\1/3&0&1/3\end{array}\right)
\nonumber\\
&= \left(\begin{array}{ccc}11/18&1/6&5/18\\1/9&5/12&5/18\\5/18&5/12&4/9\end{array}\right)
\label{matrix}
\end{align}

Therefore, we obtain the matrix $W$, which can also be found in (\ref{p}), all the elements are nonnegative, and its each column sum is $1$. Moreover, $(1,1,1)^T$ is the left eigenvector for $W$ corresponding to the eigenvalue $1$, while $(3,2,3)^T$ is the right eigenvector for $W$ corresponding to the eigenvalue $1$.

Next, we will discuss some properties of this matrix $W$. In order to do that, some useful lemmas should be introduced.
\begin{lem}\label{g}
(See \cite{Horn}) (Gersgorin Disc Theorem)
Let $A=(a_{ij})\in R^{n\times n}$ and
\begin{align*}
C_j^{\prime}(A)=\sum\limits_{i=1,i\ne j}^n|a_{ij}|, \quad~~~~ 1\le j\le n
\end{align*}
denote the deleted absolute column sums of $A$. Then all eigenvalues of $A$ are located in the union of $n$ discs
\begin{eqnarray*}
\bigcup\limits_{j=1}^n\{z\in \mathcal{C}: |z-a_{jj}|\le C_j^{\prime}\}.
\end{eqnarray*}
\end{lem}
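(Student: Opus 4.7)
The plan is to deduce the column version stated here from the classical row version of Gersgorin's theorem, exploiting the fact that $A$ and $A^T$ share the same spectrum. Since the lemma is cited from Horn's text, the proof is standard, but I would reconstruct it along the following lines.

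First, I would establish the row analogue. Suppose $\lambda$ is an eigenvalue of $A$ with nonzero eigenvector $x=(x_1,\ldots,x_n)^T$. Pick an index $k$ with $|x_k|=\max_{1\le i\le n}|x_i|$; this maximum is strictly positive because $x\ne 0$. The $k$-th component of the identity $Ax=\lambda x$ reads $\sum_{j=1}^n a_{kj}x_j=\lambda x_k$, which I would rearrange as $(\lambda-a_{kk})x_k=\sum_{j\ne k}a_{kj}x_j$. Taking absolute values and using the triangle inequality gives $|\lambda-a_{kk}|\,|x_k|\le \sum_{j\ne k}|a_{kj}|\,|x_j|\le |x_k|\sum_{j\ne k}|a_{kj}|$, and dividing by $|x_k|>0$ places $\lambda$ in the $k$-th deleted row disc. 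Hence every eigenvalue of $A$ lies in the union of the row discs.

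Second, I would promote this to the column statement in the lemma by applying the row version to $A^T$. Since $A$ and $A^T$ have identical characteristic polynomials, their eigenvalues coincide; the diagonal entries of $A^T$ equal those of $A$; and the deleted row sums of $A^T$ are precisely the deleted column sums $C_j'(A)$ of $A$. Therefore every eigenvalue of $A$ lies in $\bigcup_{j=1}^n\{z\in\mathcal{C}:|z-a_{jj}|\le C_j'(A)\}$, which is what the lemma asserts.

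The only mildly delicate point is the selection of the maximum-modulus component: one must argue that such a $k$ exists and that $|x_k|>0$ before dividing. Beyond this, the argument is genuinely routine, and there is no serious obstacle; the authors plainly invoke the result as an off-the-shelf tool that will, presumably, be applied to the column-stochastic matrix $W$ constructed in equation (\ref{u}) in order to locate its eigenvalues inside the unit disc and thereby control the iterates $W^kF$.
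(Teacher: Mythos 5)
Your argument is correct: the maximum-modulus component trick gives the row version, and passing to $A^T$ (same spectrum, same diagonal, deleted row sums of $A^T$ equal to the deleted column sums $C_j'(A)$) yields exactly the column form stated in the lemma. Note, however, that the paper offers no proof of this statement at all --- it is quoted verbatim from Horn and Johnson as an off-the-shelf tool --- so there is no in-paper argument to compare against; what you have written is the standard textbook proof, and your guess about its role is also right (it is combined with Property 1 and the Perron--Frobenius theorem to confine the spectrum of the column-stochastic matrix $W$ to the closed unit disc, so that all eigenvalues other than $1$ have modulus strictly less than $1$ and $W^N$ converges).
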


\begin{lem}\label{pf}
(See \cite{Horn}) (Perron-Frobenius Theorem)
For $n\times n$ matrix $A$, which is irreducible nonnegative matrix, then
\begin{enumerate}
  \item $\rho(A)>0$, where $\rho(A)$ is the spectral radius;
  \item $\rho(A)$ is an eigenvalue of $A$;
  \item there is an vector $x>0$ and $Ax=\rho(A) x$;
  \item $\rho(A)$ is an algebraically (and hence geometrically) simple eigenvalue of $A$.
\end{enumerate}
\end{lem}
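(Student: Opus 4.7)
The plan is to sketch the classical proof of the Perron--Frobenius theorem for irreducible nonnegative matrices, organising the four conclusions essentially in the order (1), (3), (2), (4), each building on the previous. The key auxiliary observation underpinning everything is that for an irreducible nonnegative $A\in R^{n\times n}$ the matrix $(I+A)^{n-1}$ is strictly positive entrywise: irreducibility guarantees that for any pair of indices $(i,j)$ some power $A^k$ with $k\le n-1$ has a positive $(i,j)$ entry, and all such powers appear with positive coefficients in the binomial expansion of $(I+A)^{n-1}$. An immediate consequence is that $A$ cannot be nilpotent, so $\rho(A)>0$, which disposes of (1).

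For (2) and (3) my main tool would be the Collatz--Wielandt quantity
\begin{align*}
r(x)\;=\;\min_{i:\,x_i>0}\frac{(Ax)_i}{x_i},\qquad r\;=\;\sup_{x\ge 0,\,x\ne 0}r(x).
\end{align*}
Restricting to the compact simplex $\{x\ge 0:\sum_i x_i=1\}$, I would show by an upper-semicontinuity/compactness argument that the supremum is attained at some $x^*\ge 0$. Strict positivity of $x^*$ can then be forced by replacing $x^*$ with $(I+A)^{n-1}x^*$, which is strictly positive and cannot decrease $r$; a similar trick shows that the componentwise inequality $Ax^*\ge r\,x^*$ must in fact be an equality, since any slack would be eliminated after multiplying through by $(I+A)^{n-1}$. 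Finally, $r=\rho(A)$ follows by taking any eigenvalue $\lambda$ of maximal modulus with eigenvector $u$ and using $|\lambda|\cdot|u|\le A|u|$ componentwise, so $|\lambda|\le r(|u|)\le r$; the reverse inequality is automatic because $r$ is itself an eigenvalue of $A$.

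For (4), I would first argue geometric simplicity: if $y$ is another real $\rho(A)$-eigenvector, then for a suitable scalar $t$ the combination $x^*-t y$ is a $\rho(A)$-eigenvector with a zero entry but is not identically zero, contradicting the positivity argument above unless $y$ is a scalar multiple of $x^*$. To upgrade to algebraic simplicity, I would apply the same construction to $A^T$ to obtain a strictly positive left eigenvector $y^*>0$ with $A^T y^*=\rho(A)y^*$. If $\rho(A)$ were a multiple root of the characteristic polynomial there would exist a generalised eigenvector $v$ with $(A-\rho(A)I)v=x^*$; pairing with $y^*$ then produces $0=y^{*T}(A-\rho(A)I)v=y^{*T}x^*>0$, a contradiction.

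The main obstacle, I expect, is the attainment-plus-positivity step underlying (2)--(3): setting up the compactness argument cleanly on the simplex, and then invoking the strict positivity of $(I+A)^{n-1}$ twice --- once to force $x^*>0$ and once to promote the eigenvalue inequality to an equality. Once that is in place, parts (1) and (4) reduce to the short arguments sketched above.
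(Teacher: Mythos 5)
The paper does not actually prove this lemma: it is imported verbatim from Horn and Johnson \cite{Horn} with a ``(See \cite{Horn})'' tag, and only its conclusions are used later (to establish points 1--3 of the paper's Theorem 1). So there is no in-paper argument to compare against; what you have written is essentially the standard Wielandt/Collatz--Wielandt proof from that very reference, and its main steps are all sound: strict positivity of $(I+A)^{n-1}$, maximisation of $r(x)=\min_{i:x_i>0}(Ax)_i/x_i$ over the (compact, strictly positive) image of the simplex under $(I+A)^{n-1}$ to obtain an attained maximiser $x^*>0$, the slack-elimination argument upgrading $Ax^*\ge r x^*$ to equality, the identification $r=\rho(A)$ via $|\lambda|\,|u|\le A|u|$, and the passage from geometric to algebraic simplicity by pairing a putative generalised eigenvector against the positive left eigenvector $y^*$. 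Two small touch-ups you would need in a full write-up: (i) non-nilpotency is not quite ``immediate'' from $(I+A)^{n-1}>0$ --- add one line, e.g.\ that positivity of the off-diagonal entries forces some power $A^k$ ($k\ge 1$) to have a positive diagonal entry and hence positive trace, which a nilpotent matrix cannot have, or simply note $\rho(A)=r\ge r(\mathbf{1})=\min_i\sum_j a_{ij}>0$ since irreducibility excludes zero rows for $n\ge 2$; (ii) in the geometric-simplicity step, reduce a complex $\rho(A)$-eigenvector to the real case by observing that $A$ and $\rho(A)$ are real, so its real and imaginary parts are themselves eigenvectors. Neither is a gap in the approach; the sketch is a faithful reconstruction of the proof the paper delegates to its reference.
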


\begin{ass}
In the following discussion, we will always assume graph $G$ is connected.
\end{ass}

In fact, this assumption is not very strong. Because if $G$ is not connected, then $G$ can be split into at least two connected components, for example $G_1$ and $G_2$, then $V_1=V_1^1\bigoplus V_1^2$ and $V_2=V_2^1\bigoplus V_2^2$, such that $(V_1^1,V_2^1)$ and $(V_1^2,V_2^2)$ are two bipartites with no connection, and this case contradicts to our aim of using resource location feedback. Moreover, under this condition, one can also know that matrix $W$ is irreducible.

\section{Main results}
Now, it is time to bridge the gap between GRM and NBI. Recalling the process of NBI, one can find that the two steps $V_1\rightarrow V_2$ and $V_2\rightarrow V_1$ can be regarded as a round of resource diffusion, and the obtained resource vector $F^{(1)}$ is determined by the original resource vector $F$ and the matrix $W$, see (\ref{m}). Inspired by NBI, we propose a {\bf multiple rounds of resource diffusion algorithm}, which can be described as
\begin{align}\label{M-NBI}
\begin{array}{lllll}
V_1\rightarrow V_2\rightarrow &V_1\rightarrow V_2\rightarrow &V_1\rightarrow V_2\rightarrow &V_1\rightarrow V_2\rightarrow
&V_1\rightarrow\cdots\\
|~~~\mathrm{1st ~~round}&|~~~\mathrm{2nd ~~round}&|~~~\mathrm{3rd ~~round}&|~~~\mathrm{4th ~~round}&|~~~~~~~~~\cdots\\
F&F^{(1)}&F^{(2)}&F^{(3)}&F^{(4)}~~~\cdots
\end{array}
\end{align}
After $N$ round, the resource location $F^{(N)}$ is determined as:
\begin{align}
F^{(N)}=WF^{(N-1)}=W^2F^{(N-2)}=\cdots=W^{N-1}F^{(1)}=W^NF.
\end{align}

In the next, we will explore the property of matrix $W$ and $W^N$, in order to finalize the relationship between NBI and GRM.

\begin{thm}
For the matrix $W\in R^{n\times n}$ defined in (\ref{u}), suppose it it irreducible, then it has the following properties:
\begin{enumerate}
  \item $M$ must have an eigenvalue $\lambda_1=1$, and its multiplicity is $1$. Suppose $e_l$ and $e_r$ are the corresponding left eigenvector and right eigenvector, and $e_l^Te_r=1$.
  \item $e_{r}>0$, and $e_r$ can be chosen as:
\begin{eqnarray}\label{right}
e_r=(k(o_1),k(o_2),\cdots,k(o_n))^T£¬
\end{eqnarray}
while
\begin{eqnarray}\label{left}
e_l=\alpha (1,\cdots,1)^T, \alpha=1/\sum_{j=1}^nk(o_j).
\end{eqnarray}
  \item The other $n-1$ eigenvalues $\lambda_j\in \mathcal{C}, j=2,3,\cdots,n$ satisfying $|\lambda_j|<1$.
  \item $\lim\limits_{N\rightarrow+\infty}W^N=e_re_l^T$.
\end{enumerate}
\end{thm}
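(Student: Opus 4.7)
The plan is to establish the four claims in order, leaning on the two named tools (Gersgorin and Perron--Frobenius) together with the column-stochasticity already verified in the Property.

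First I would pin down that $\rho(W)=1$. The Property shows every column of $W$ sums to $1$, so $(1,\ldots,1)^T$ is a left eigenvector with eigenvalue $1$; hence $\rho(W)\ge 1$. Applying the Gersgorin Disc Theorem (Lemma \ref{g}) to $W$ in its column-sum form places every eigenvalue in $\bigcup_j\{z:|z-w_{jj}|\le 1-w_{jj}\}$, each of which lies in the closed unit disc (since $w_{jj}\ge 0$ and the column sum is $1$). So $\rho(W)\le 1$ and therefore $\rho(W)=1$.

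Next, because the connectedness assumption makes $W$ irreducible and clearly nonnegative, Perron--Frobenius (Lemma \ref{pf}) gives that $\rho(W)=1$ is an algebraically simple eigenvalue with a strictly positive eigenvector. That proves the multiplicity claim and the positivity of $e_r$. To identify $e_r$ explicitly I would just reproduce the short telescoping computation already in the Remark, $W\cdot(k(o_1),\ldots,k(o_n))^T=(k(o_1),\ldots,k(o_n))^T$, which shows $(k(o_1),\ldots,k(o_n))^T$ is a positive eigenvector for eigenvalue $1$ and hence, by simplicity, a scalar multiple of $e_r$; we just take it as $e_r$. For $e_l$, the Property already exhibits $(1,\ldots,1)^T$ as a left eigenvector, and normalizing via $e_l^Te_r=1$ forces $\alpha=1/\sum_{j=1}^n k(o_j)$, giving \eqref{left}.

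For claim 3 the extra input I need is \emph{primitivity}, not just irreducibility, since an irreducible stochastic matrix can have other eigenvalues on the unit circle (cyclic case). The key observation is that
\begin{equation*}
w_{jj}=\frac{1}{k(o_j)}\sum_{l=1}^m\frac{a_{jl}^2}{k(u_l)}=\frac{1}{k(o_j)}\sum_{l:\,a_{jl}=1}\frac{1}{k(u_l)}>0
\end{equation*}
for every $j$, because every object has at least one user-neighbor in a connected graph. A nonnegative irreducible matrix with a positive diagonal entry is primitive, so by the primitive form of Perron--Frobenius the spectral radius $1$ is the unique eigenvalue of modulus $1$; hence $|\lambda_j|<1$ for $j=2,\ldots,n$. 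This step is the only real obstacle: one has to remember that Perron--Frobenius alone is insufficient and must supplement it with the positive-diagonal observation to upgrade to primitivity.

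Finally, claim 4 follows from the Jordan decomposition of $W$. Write $W=e_re_l^T+R$, where $R$ is the part of $W$ acting on the invariant complement of the eigenvalue-$1$ eigenspace (the spectral projector onto $\mathrm{span}(e_r)$ along that complement is exactly $e_re_l^T$ because of the normalization $e_l^Te_r=1$). Since $e_re_l^T$ and $R$ annihilate each other's ranges, $W^N=e_re_l^T+R^N$, and $\rho(R)=\max_{j\ge 2}|\lambda_j|<1$ by claim 3 implies $R^N\to 0$. Therefore $\lim_{N\to\infty}W^N=e_re_l^T$, completing the proof.
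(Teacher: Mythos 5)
Your proposal is correct, and for claims 1, 2 and 4 it follows essentially the same route as the paper: column-stochasticity plus Gersgorin (Lemma \ref{g}) to get $\rho(W)=1$, Perron--Frobenius (Lemma \ref{pf}) for simplicity and positivity, the explicit computation from Remark 1 to identify $e_r$ and $e_l$, and a spectral-projector splitting $W^N=e_re_l^T+R^N$ that is the same in substance as the paper's Jordan-form argument $W^N=PJ^NP^{-1}\rightarrow P\,\mathrm{diag}(1,0,\cdots,0)P^{-1}$. Where you genuinely diverge --- and improve on the paper --- is claim 3. The paper disposes of it with ``one can easily get the conclusions,'' but as you observe, Gersgorin only yields $|\lambda_j|\le 1$ and Perron--Frobenius for a merely \emph{irreducible} nonnegative matrix does not exclude other eigenvalues on the spectral circle (the imprimitive or cyclic case, e.g.\ a permutation matrix). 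Your supplementary observation that
\begin{equation*}
w_{jj}=\frac{1}{k(o_j)}\sum_{l:\,a_{jl}=1}\frac{1}{k(u_l)}>0
\end{equation*}
for every $j$, so that $W$ is irreducible with positive diagonal and hence primitive, is exactly the missing ingredient needed to conclude $|\lambda_j|<1$ for $j\ge 2$; without it, claim 3 (and therefore the convergence in claim 4) is not actually justified by the lemmas the paper cites. Your version is the more complete proof.
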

\begin{proof}
According to the Gersgorin disc theorem (Lemma \ref{g}), Perron-Frobenius theorem (Lemma \ref{pf}), Property 1 and Remark 1, one can easily get the conclusions 1, 2 and 3. Next, we will concentrate on proving the fourth point.

Denote $J$ as the Jordan form of $W$, i.e.,
\begin{align}
W=PJP^{-1},
\end{align}
where $J$ can be written in the form as: $J=\mathrm{diag}\{1,J_2,\cdots,J_q\}$, where
\begin{align*}
J_{\theta}=\left(\begin{array}{ccccc}
\lambda&1&&&\\&\lambda&1&&\\
&&\ddots&\ddots&\\&&&\lambda&1\\
&&&&\lambda\end{array}\right), \theta=2,\cdots,q
\end{align*}
where $\lambda$ are chosen from $\lambda_i, i=2,\cdots,n$, so $|Re(\lambda_i)|<1$, and
\begin{align*}
J_{\theta}^N\rightarrow 0, N\rightarrow +\infty.
\end{align*}
Therefore, if we denote $W^{\star}=\lim\limits_{N\rightarrow +\infty}W^N$, then
\begin{eqnarray*}
W^N=(PJP^{-1})^N=PJ^NP^{-1}\rightarrow P\mathrm{diag}(1,0,\cdots,0)P^{-1}, N\rightarrow +\infty,
\end{eqnarray*}
i.e., $W^{\star}=P\mathrm{diag}(1,0,\cdots,0)P^{-1}$.

Because $AP=PJ$, so the first column of $P$ is $e_r$. Similarly, because $P^{-1}A=JP^{-1}$, so the first row of $P^{-1}$ is $e_l^T$. Since $P^{-1}P=I$, therefore, $e_l^Te_r=1$, which is satisfied in the first point.

In all, we can get that $W^{\star}=e_re_l^T$. The proof is completed.
\end{proof}

For example, for the matrix $W$ in (\ref{matrix}), simple calculations can show that eigenvalues of $W$ are: $\lambda_1=1, \lambda_2=0.4034, \lambda_3=0.0689$; while $e_r=(3, 2, 3)^T$ and $e_l=1/8\cdot (1, 1, 1)^T$ for eigenvalue $1$, so:
\begin{eqnarray}\label{w2}
W^{\star}=e_re_l^T=
\left(\begin{array}{c}3\\2\\3\end{array}\right)\cdot 1/8\cdot (1,1,1)
=\left(\begin{array}{ccc}3/8&3/8&3/8\\2/8&2/8&2/8\\3/8&3/8&3/8\end{array}\right)
\end{eqnarray}
Next, we use the Matlab to verify the correctness of our claim.
\begin{eqnarray*}
W^{2}
=\left(\begin{array}{ccc}
0.4691  &  0.2870  &  0.3395\\
0.1914  &  0.3079  &  0.2701\\
0.3395  &  0.4051  &  0.3904
\end{array}\right);
W^{3}
=\left(\begin{array}{ccc}
    0.4129  &  0.3392  &  0.3609\\
    0.2262  &  0.2727  &  0.2587\\
    0.3609  &  0.3881  &  0.3804
\end{array}\right);\\
W^{4}
=\left(\begin{array}{ccc}
    0.3903  &  0.3606  &  0.3693\\
    0.2404  &  0.2591  &  0.2536\\
    0.3693  &  0.3803  &  0.3771
\end{array}\right);
W^{5}
=\left(\begin{array}{ccc}
    0.3812  &  0.3692 &   0.3727\\
    0.2461  &  0.2537 &   0.2514\\
    0.3727  &  0.3772 &   0.3758
\end{array}\right);\\
W^{6}
=\left(\begin{array}{ccc}
    0.3775 &   0.3727  &  0.3741\\
    0.2484 &   0.2515  &  0.2506\\
    0.3741 &   0.3759  &  0.3753
\end{array}\right);
W^{7}
=\left(\begin{array}{ccc}
    0.3760  &  0.3741  &  0.3746\\
    0.2494  &  0.2506  &  0.2502\\
    0.3746  &  0.3754  &  0.3751
\end{array}\right);\\
W^{8}
=\left(\begin{array}{ccc}
    0.3754  &  0.3746 &   0.3749\\
    0.2497  &  0.2502 &   0.2501\\
    0.3749  &  0.3751 &   0.3751
\end{array}\right);
W^{9}
=\left(\begin{array}{ccc}
    0.3752  &  0.3748 &   0.3749\\
    0.2499  &  0.2501 &   0.2500\\
    0.3749  &  0.3751 &   0.3750
\end{array}\right);\\
W^{10}
=\left(\begin{array}{ccc}
    0.3751  &  0.3749 &   0.3750\\
    0.2500  &  0.2500 &   0.2500\\
    0.3750  &  0.3750 &   0.3750
\end{array}\right);
W^{11}
=\left(\begin{array}{ccc}
    0.3750  &  0.3750 &   0.3750\\
    0.2500  &  0.2500 &   0.2500\\
    0.3750  &  0.3750 &   0.3750
\end{array}\right).
\end{eqnarray*}
Obviously, $W^{N}$ can converge to $W^{\star}$ in (\ref{w2}) when $N=11$, therefore, our claim is correct.

Based on the above theorem, considering the limit case $W^{\star}$, for any user $l$, since $f^{\prime}(o_i)=\sum_{j=1}^nw_{ij}^{\star}f(o_j)=\sum_{j=1}^nw_{ij}^{\star}a_{jl}$, therefore, we have
\begin{align}
\left(\begin{array}{c}f^{\prime}(o_1)\\f^{\prime}(o_2)\\ \cdots\\f^{\prime}(o_n)\end{array}\right)=W^{\star}\cdot\left(\begin{array}{c}a_{1l}\\a_{2l}\\ \cdots\\a_{nl}\end{array}\right)=e_re_l^T\cdot\left(\begin{array}{c}a_{1l}\\a_{2l}\\ \cdots\\a_{nl}\end{array}\right)=\alpha k(y_l)e_r=\alpha k(y_l)\left(\begin{array}{c}k(o_1)\\k(o_2)\\ \cdots\\k(o_n)\end{array}\right)
\end{align}
i.e., for recommendation one only needs to see the value of $e_r$, while $e_r$ denotes the vector composed of degree, and this is just the GRM.

\begin{rem}
The use of left eigenvector and right eigenvector in recommendation systems can be retrieved to \cite{Zhang07}. Moreover, it is also widely adopted in the analysis of synchronization and consensus literature, see \cite{Liu}.
\end{rem}

\begin{rem}
In fact, using the Hamilton-Cayley Theorem, matrix $W$ satisfies the polynomial $f(\lambda)=|\lambda I-W|$ with degree $n$, i.e., ${W}^{n}+a_1{W}^{n-1}+\cdots+a_{n-1}W+a_nI=0$. Moreover, according to the so-called minimal polynomial for matrix $W$, there exists a polynomial with degree $n^{\prime}\le n$, such that $P(W)={W}^{n^{\prime}}+a_1{W}^{n^{\prime}-1}+\cdots+a_{n^{\prime}-1}W+a_{n^{\prime}}I=0$.
Therefore, for any integer $N$, from the theory of polynomial division, there exist a quotient $Q(W)$ and a remainder $R(W)$, such that $W^N=Q(W)P(W)+R(W)$, where the degree of the remainder is less than the degree of the divisor $P(W)$, that is to say, any $W^N, N=1,2,\cdots$ can be equally described by a polynomial with degree less than $n^{\prime}$. In \cite{Zhou09}, the authors consider the second and even the third round, and simulations show that they exhibit better recommendation result than just only one round. Since we just want to emphasis the relationship between NBI and GRM from the theoretical view, here we omit the numerical simulations. Interested readers are encouraged to investigate this problem.
\end{rem}
\section{Conclusion}\label{conc}
In this paper, we first extend the NBI which can be regarded the first round of resource diffusion to the multiple rounds of resource diffusion algorithm. Then by rigorous theoretical analysis, we finally prove that the GRM is just the limit of our proposed algorithm. That is to say, for the multiple rounds of resource diffusion algorithm, NBI is the case with $N=1$, where $N$ means the times of diffusion, while GRM is the case with $N=+\infty$. We bridge the gap between GRM and NBI successfully.

\section*{Acknowledgment}
This work was supported by
the National Science Foundation of China under Grant No. 61203149, 61233016,
the National Basic Research Program of China (973 Program) under
Grant No. 2010CB328101, ``Chen Guang'' project supported by Shanghai
Municipal Education Commission and Shanghai Education Development
Foundation under Grant No. 11CG22, the Fundamental Research Funds
for the Central Universities under Grant No. 20140764.


\begin{thebibliography}{99}
\bibitem{lv12}
L.Y. Lv, M. Medo, C.Y. Yeung, Y.C. Zhang, Z.K. Zhang, T. Zhou, Recommender systems, Phys. Rep. 519 (1) (2012) 1-49.

\bibitem{Zhou}
T. Zhou, J. Ren, M. Medo, Y.C. Zhang, Bipartite network projection and personal recommendation, Phys. Rev. E 76 (4) (2007) 046115.

\bibitem{Zhang2007}
Y.C. Zhang, M. Blattner, Y.K. Yu, Heat conduction process on community networks as a recommendation model, Phys.Rev. Lett. 99 (15) (2007) 154301.

\bibitem{Zhou10}
T. Zhou, Z. Kuscsik, J.G. Liu, M. Medo, J. R. Wakeling, Y.C. Zhang, Solving the apparent diversity-accuracy dilemma of recommender systems, Proc. Natl. Acad. Sci. USA. 107 (10) (2010) 4511-4515.

\bibitem{Zhang07}
Y.C. Zhang, M. Medo, J. Ren, T. Zhou, T. Li, F. Yang, Recommendation model based on opinion diffusion, Eur. Phys. Lett. 80 (6) (2007) 68003.

\bibitem{Zhou08}
T. Zhou, L.L. Jiang, R.Q. Su, Y.C. Zhang, Effect of initial configuration on network-based recommendation, Eur. Phys. Lett. 81 (5) (2008) 58004.

\bibitem{Zhou09}
T. Zhou, R.Q. Su, R.R. Liu, L.L. Jiang, B.H. Wang, Y.C. Zhang, Accurate and diverse recommendations via eliminating redundant correlations, New J. Phys. 11 (2009) 123008.

\bibitem{K2012}
H. R. Kenneth, Discrete Mathematics and Its Applications 7th edition, McGraw-Hill, USA, 2012.


\bibitem{Horn}
R.A. Horn, C.R. Johnson, Matrix Analysis, Cambridge Uiversity Press, UK, 1985.

\bibitem{Liu}
X.W. Liu, T.P. Chen, W.L. Lu, Consensus problem in directed networks of multi-agents via nonlinear protocols, Phys. Lett. A 373 (2009) 3122-3127.






\end{thebibliography}
\end{document}